\documentclass[10pt,conference]{IEEEtran}
\IEEEoverridecommandlockouts

\usepackage{cite}
\usepackage{amsmath,amssymb,amsfonts}
\usepackage{amsthm}
\usepackage{algorithmic}
\usepackage{graphicx}
\usepackage{textcomp}
\usepackage{xcolor}
\usepackage{booktabs}
\usepackage{bm}
\usepackage{multirow}
\usepackage{pgfplots}
\usepackage{tikz}
\pgfplotsset{compat=1.17}
\usetikzlibrary{arrows.meta,decorations.pathreplacing,calc,shapes.geometric,positioning}

\newtheorem{theorem}{Theorem}

\newtheorem{proposition}{Proposition}

\newtheorem{assumption}{Assumption}

\begin{document}

\title{Safe Data-Driven Control and Dynamical Learning\\
via Constrained Neural Architectures and Koopman Operators}

\author{Lin Feng, Xin He \\
\textit{Faculty of Engineering, King Saud University, Jeddah, Saudi Arabia.}\\
    sad\_poly@163.com}

\maketitle

\begin{abstract}
The deployment of learning-based models in safety-critical
control systems demands mathematical guarantees that standard
regression architectures cannot provide. This paper presents an
integrated framework that bridges Neural Ordinary Differential
Equations (Neural ODEs), measurement-induced geometric structures,
and Koopman operator theory, with the explicit aim of producing
data-driven models whose stability certificates are computable,
not merely conjectured. Three complementary components are developed
and analyzed. First, ControlSynth Neural ODEs enforce global
convergence through tractable linear matrix inequalities (LMIs),
enabling complex nonlinear dynamics to be captured without
sacrificing boundedness guarantees. Second, the ICODE formulation
incorporates extrinsic environmental inputs into the learned
vector field, while measurement-induced bundle structures confine
state trajectories to physically admissible manifolds. Third, a
systematic ISS verification pipeline certifies the input-to-state
stability of Koopman-identified models via a convex $L_2$-gain
LMI, converting an otherwise intractable robustness question into
a solvable semidefinite program. The certified model is embedded
in an ICODE-MPPI controller, which uses continuous-time residual
learning inside a stochastic sampling loop to deliver robust path
tracking under parametric uncertainty and persistent disturbances.
Numerical experiments on a vehicle path-tracking benchmark and
a nonlinear mechanical oscillator demonstrate up to a 61\% reduction
in tracking RMSE and a 54\% reduction in state estimation error
relative to uncertified baselines, with near-zero LMI violation
rates across all evaluated disturbance levels.
\end{abstract}

\begin{IEEEkeywords}
Neural ODEs, Koopman operator, input-to-state stability, ICODE,
ControlSynth, measurement-induced bundle structures, robust path
tracking, linear matrix inequalities.
\end{IEEEkeywords}

\section{Introduction}

Data-driven representations of dynamical systems have become
indispensable in domains where first-principles modeling is
impractical: high-speed autonomous vehicles, flexible robotic
manipulators, power-electronic converters, and biological networks
all exhibit nonlinearities, time delays, and parameter variations
that resist clean analytical description. Neural ODEs
\cite{chen2018neural} extended the expressiveness of residual
networks to continuous time, providing a principled tool for
learning system trajectories from irregularly sampled data.
Koopman operator methods \cite{mezic2005,brunton2021modern,korda2018linear}
offer an alternative route: by lifting the state into a
higher-dimensional observable space, the nonlinear dynamics
are represented as a linear semigroup, making spectral and
optimal-control tools applicable. The practical algorithm for
computing finite-dimensional Koopman approximations is
Extended Dynamic Mode Decomposition (EDMD)
\cite{williams2015edmd}, which generalizes the earlier Dynamic
Mode Decomposition \cite{schmid2010dmd} by allowing an arbitrary
dictionary of observables; convergence of EDMD to the
true Koopman operator as dictionary size grows was established
in \cite{korda2018convergence}. Deep encoder--decoder architectures
have further improved approximation quality by jointly learning
the lifting map from data \cite{lusch2018deep}. Both strategies
have advanced rapidly, yet a common weakness persists---neither
automatically endows the learned model with a formal certificate
of stability or safety. In safety-critical applications, a model
that performs well on a training distribution but diverges under
distributional shift or adversarial disturbances cannot be
deployed with confidence.

Recent work has attacked this limitation through two broad
strategies. The first embeds structural constraints directly
into the neural network parameterization, so that every
realization of the model satisfies a desired invariant by
construction. Contraction analysis \cite{lohmiller1998} provides
the theoretical basis: a dynamical system is contracting if its
Jacobian is uniformly negative definite in some metric, ensuring
that all trajectories converge exponentially toward one another
regardless of initial conditions. ControlSynth Neural ODEs
\cite{controlsynth} operationalize this principle: by introducing
an auxiliary control sub-network and imposing a Persidskii-type
inequality on the Jacobian, global convergence is certified via
tractable LMIs even for systems whose nonlinear structure would
otherwise defy analysis. The ICODE architecture \cite{icode}
complements this by handling extrinsic environmental
inputs---treating them as explicit arguments rather than hidden
parameters---and providing sufficient conditions for the
contraction property to be preserved in the presence of nonsmooth
external forcing. Geometric consistency across environments is
further addressed in \cite{bundle}, where a fiber bundle structure
over the state space enables measurement-aware Control Barrier
Functions (CBFs) \cite{ames2017cbf} that adapt to local sensing
conditions, with provable learning convergence and constraint
satisfaction.

The second strategy operates post-hoc: given an already-identified
model, one checks whether a stability certificate exists.
Input-to-state stability (ISS), introduced by Sontag
\cite{sontag2008iss} as a framework for quantifying robustness
to bounded disturbances, is the natural certificate sought here.
For Koopman-identified models, \cite{koopmaniss} provides a
complete LMI-based ISS verification framework: feasibility of
the LMI is both necessary and sufficient for the $L_2$-gain to
remain bounded, and the optimal gain $\gamma^\star$ is the
solution to a convex semidefinite program \cite{boyd1994}.
The practical consequence is that a Koopman model can be
retrospectively certified---or flagged as unsafe---using
standard optimization solvers, without any modification to
the identification procedure.

Closing the loop from certified models to real-time control
remains a non-trivial step. ICODE-MPPI \cite{icodemppi} addresses
this by embedding a continuous-time ICODE residual model inside
the stochastic rollout mechanism of Model Predictive Path Integral
(MPPI) control, achieving up to 69\% reduction in cross-tracking
error under persistent disturbances compared with standard MPPI
using nominal dynamics alone.

The present paper synthesizes these four contributions into a
unified pipeline and provides the following specific additions:
\begin{enumerate}
  \item A combined stability analysis showing that the convergence
        certificate of ControlSynth \cite{controlsynth} and the
        bundle constraint of \cite{bundle} jointly imply ISS of
        the composite Neural ODE model, with an explicit gain
        formula (Section~III-B).
  \item A self-contained presentation of the Koopman ISS
        verification LMI from \cite{koopmaniss}, with a new
        perturbation sensitivity result quantifying how the
        optimal gain $\gamma^\star$ degrades with dictionary
        truncation error (Section~III-C).
  \item A detailed description of the ICODE-MPPI architecture
        \cite{icodemppi} and an analysis of the rollout variance
        reduction afforded by certified predictors
        (Section~IV).
  \item Numerical validation on two benchmarks, with ablation
        studies isolating the contribution of each structural
        constraint (Section~V).
\end{enumerate}

\section{Problem Formulation}

\subsection{System Class and Learning Objective}

Consider an unknown continuous-time nonlinear system
\begin{equation}
  \dot{x}(t) = f(x(t), u(t), \xi(t)) + w(t),
  \quad x(t)\in\mathcal{M}\subseteq\mathbb{R}^n,
  \label{eq:true_sys}
\end{equation}
where $u(t)\in\mathbb{R}^m$ is the control input,
$\xi(t)\in\mathcal{E}\subset\mathbb{R}^{n_\xi}$ is a
measurable extrinsic input (road friction, wind, load),
$w\in L_2[0,\infty)$ is a bounded external disturbance,
and $\mathcal{M}$ is a smooth Riemannian submanifold
encoding physical constraints. The function $f$ is locally
Lipschitz but otherwise unknown; it is to be learned from
a finite dataset of state-input-output trajectories.

Three objectives are pursued simultaneously:
\begin{enumerate}
  \item \emph{Convergence}: the learned model $\hat{f}$
        shall be globally contracting, so that two solutions
        starting from different initial conditions converge
        toward each other exponentially \cite{lohmiller1998}.
  \item \emph{Manifold confinement}: predictions
        $\hat{x}(t)=\Phi^{-1}(\hat{z}(t))$ shall remain in
        $\mathcal{M}$ at all times; the lifting $\Phi$ must
        therefore be invertible, a property enforced through
        autoencoder reconstruction losses related to those
        used in deep Koopman methods \cite{lusch2018deep}.
  \item \emph{ISS}: the map from $w$ to the state error
        $e=x-\hat{x}$ shall be input-to-state stable
        \cite{sontag2008iss} with a computable $L_2$-gain
        $\gamma$.
\end{enumerate}

\subsection{Notation}

$\mathrm{He}(M)=M+M^T$; $\lambda_{\min}(M)$ and
$\lambda_{\max}(M)$ denote the smallest and largest
eigenvalues of a symmetric matrix $M$; $\|\cdot\|$ is the
Euclidean norm; $\|\cdot\|_{L_2}$ is the signal $L_2$-norm.
For a matrix $A$, $\|A\|_2$ denotes the spectral norm.
$M>0$ ($M\geq 0$) denotes positive (semi)definiteness.

\section{Certified Data-Driven Modeling}

\subsection{ControlSynth Neural ODEs with Bundle Constraints}
\label{sec:csode}

Following \cite{controlsynth}, we parameterize the unknown
drift as a ControlSynth Neural ODE (CSODE):
\begin{equation}
  \dot{z}(t) = F_\theta(z(t), u(t); \xi(t))
  := A_\theta z + \Phi_\theta(z,\xi) + B_\theta u,
  \label{eq:csode}
\end{equation}
where $z=\Phi(x)\in\mathbb{R}^r$ is a latent state obtained
by a learned lifting $\Phi$, $A_\theta\in\mathbb{R}^{r\times r}$
is a trainable matrix, $\Phi_\theta:\mathbb{R}^r\times\mathcal{E}
\to\mathbb{R}^r$ is an auxiliary control sub-network capturing
nonlinear mode coupling, and $B_\theta\in\mathbb{R}^{r\times m}$
maps the control input. The extrinsic input $\xi$ enters
$\Phi_\theta$ directly, following the input-concomitant
philosophy of ICODE \cite{icode}: $\xi$ is treated as an
explicit real-time argument rather than a hidden parameter
averaged over training data, which is essential when $\xi$
is nonsmooth or piecewise constant.

\begin{assumption}\label{ass:sector}
The sub-network $\Phi_\theta$ satisfies a quadratic sector
condition: there exists $\kappa>0$ such that
$\Phi_\theta(z,\xi)^T\!\left[z - \kappa^{-1}\Phi_\theta(z,\xi)\right]
\geq 0$ for all $(z,\xi)$.
\end{assumption}

This sector condition is the structural invariant imposed
during training in \cite{controlsynth} via a Persidskii-type
LMI on the sub-network weights; it is enforced as a
penalty term in the training objective and can be verified
post-training by a single semidefinite feasibility check.
A complementary approach to building stability guarantees
into recurrent models by construction is the Recurrent
Equilibrium Network (REN) of \cite{revay2021recurrent},
which parameterizes contracting networks directly without
requiring parameter projections; the Persidskii-type
constraint in Assumption~\ref{ass:sector} serves an
analogous role for the continuous-time CSODE setting.

To enforce manifold confinement, the fiber bundle framework
of \cite{bundle} is incorporated. Let $\pi:\mathcal{Z}\to\mathcal{E}$
be a fiber bundle over the environment space. The fiber
$\mathcal{Z}_\xi=\pi^{-1}(\xi)$ represents the admissible
latent region under environment $\xi$. The encoder $\Phi(x,\xi)$
is trained jointly with a bundle-aware loss
$\mathcal{L}_{\mathrm{bundle}}=\sum_k d(\hat{z}_k,\mathcal{Z}_{\xi_k})^2$,
which penalizes latent predictions that drift off the
environment-conditioned fiber. As shown in \cite{bundle},
this construction induces measurement-aware CBFs that
provably satisfy the safety constraint along any solution
of \eqref{eq:csode} whose initial condition lies on the fiber.

\begin{theorem}[ISS of CSODE under Bundle Constraint]
\label{thm:csode_ISS}
Suppose Assumption~\ref{ass:sector} holds and the bundle
confinement loss satisfies
$\mathcal{L}_{\mathrm{bundle}}\leq\delta_b^2$ at convergence.
If there exist $P=P^T>0$ and scalar $\lambda>0$ such that
\begin{equation}
  \Omega :=
  \begin{bmatrix}
    \mathrm{He}(PA_\theta) + \lambda\kappa I + P & -\lambda I + PA_\theta^T \\
    -\lambda I + A_\theta P & -2\lambda\kappa^{-1}I
  \end{bmatrix} < 0,
  \label{eq:csode_LMI}
\end{equation}
then the error dynamics of \eqref{eq:csode} are ISS with
respect to $w$, with gain
$\gamma_{\mathrm{NN}} = \|P\|_2^{1/2}/\lambda_{\min}(P)^{1/2}$
and an additive offset of order $O(\delta_b)$.
\end{theorem}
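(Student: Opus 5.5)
We would prove the theorem with a quadratic Lyapunov function combined with an S-procedure on the sector condition. The bundle term is treated as a bounded perturbation. Throughout, the error dynamics are understood as $\dot z = A_\theta z + \Phi_\theta(z,\xi) + w_z$, where $w_z$ collects the lifted disturbance $w$ together with the fiber-projection residual. The input $B_\theta u$ is either common to the true and learned models and cancels, or is absorbed into $w_z$.

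The candidate is $V(z)=z^TPz$. Along solutions,
\[
\dot V = z^T\mathrm{He}(PA_\theta)z + 2z^TP\Phi_\theta + 2z^TPw_z .
\]
Assumption~\ref{ass:sector}, multiplied by $2\lambda\ge 0$, gives a nonnegative term $2\lambda\Phi_\theta^T z - 2\lambda\kappa^{-1}\Phi_\theta^T\Phi_\theta$. We add it to $\dot V$ and regroup in the stacked vector $\zeta=(z,\Phi_\theta)$. The quadratic part should then be dominated by $\zeta^T\Omega\zeta$.

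The main obstacle is that $\Omega$ as written does not match the naive quadratic form exactly. Its off-diagonal blocks are $-\lambda I+PA_\theta^T$, not $P+\lambda I$, and its $(1,1)$ block carries the extra terms $\lambda\kappa I + P$. We would handle this with a Schur-complement and Young-type inequality step. First we rewrite the cross term $2z^TP\Phi_\theta$ using the congruence $\mathrm{diag}(I,P)$ or a change of variable $\Phi_\theta\mapsto A_\theta$-weighted terms. Then we show that $\Omega<0$ implies
\[
\dot V + \alpha V - \|w_z\|^2 \le 0
\]
for some $\alpha>0$, with the $+P$ in the $(1,1)$ block supplying the decay margin $\alpha$. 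If the block structure refuses to align, we would fall back on a weaker but sufficient route. In that route, $\Omega<0$ gives $\zeta^T\Omega\zeta\le -\epsilon\|\zeta\|^2$, and every mismatch term is bounded in norm by a small multiple of $\|\zeta\|^2$, which is absorbed at the cost of shrinking $\epsilon$. This is the step we expect to need the most care.

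Once the dissipation inequality holds, we treat the disturbance cross term $2z^TPw_z$ by Young's inequality, $2z^TPw_z\le \tfrac{\epsilon}{2}\|z\|^2+\tfrac{2}{\epsilon}\|P\|_2^2\|w_z\|^2$. Sandwiching $\lambda_{\min}(P)\|z\|^2\le V\le\|P\|_2\|z\|^2$ and applying the comparison lemma then yields
\[
\|z(t)\|\le \sqrt{\|P\|_2/\lambda_{\min}(P)}\,e^{-\alpha t/2}\|z(0)\|+\gamma\|w\|_\infty .
\]
Normalizing $P$ and $\lambda$ (the LMI is homogeneous in $(P,\lambda)$) brings the gain to the stated form $\gamma_{\mathrm{NN}}=\|P\|_2^{1/2}/\lambda_{\min}(P)^{1/2}$.

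Finally, the bundle residual is handled separately. The bound $\mathcal{L}_{\mathrm{bundle}}\le\delta_b^2$ bounds the distance of each latent prediction to its fiber by $\delta_b$. By Lipschitz continuity of $F_\theta$ on the compact admissible region, the induced vector-field discrepancy is $O(\delta_b)$. It enters the ISS estimate as a constant input and so produces the additive offset $O(\delta_b)$. Pulling the estimate back through $\Phi^{-1}$, which is Lipschitz by the invertibility objective, transfers the ISS bound to $e=x-\hat x$ up to constants.
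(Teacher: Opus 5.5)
Your route is the same as the paper's: $V=z^TPz$, the S-procedure with Assumption~\ref{ass:sector}, a dissipation inequality, the comparison lemma, and the bundle residual treated as an $O(\delta_b)$ input. You are right that the S-procedure does not produce $\Omega$. Adding $2\lambda\,\Phi_\theta^T(z-\kappa^{-1}\Phi_\theta)\ge0$ gives $\dot V\le\zeta^T\Omega_{\mathrm{nat}}\zeta+2z^TPw_z$, where
\begin{gather*}
\Omega_{\mathrm{nat}}=\begin{bmatrix}\mathrm{He}(PA_\theta)&P+\lambda I\\ P+\lambda I&-2\lambda\kappa^{-1}I\end{bmatrix}=\Omega-\Delta,\\
\Delta=\begin{bmatrix}\lambda\kappa I+P&PA_\theta^T-P-2\lambda I\\ A_\theta P-P-2\lambda I&0\end{bmatrix}.
\end{gather*}
The paper's proof simply asserts $\dot V\le\zeta^T\Omega\zeta+2e^TPw$ here, so it offers no way around the mismatch, and your proposal does not bridge it either. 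The congruence by $\mathrm{diag}(I,P)$ is only the reparametrization $\Phi_\theta=P\psi$ of the same form. It turns the off-diagonal block into $(PA_\theta^T-\lambda I)P$, which is still not $P+\lambda I$. The fallback also fails. $\Delta$ has a zero $(2,2)$ block, so it is indefinite and $-\zeta^T\Delta\zeta$ is positive in some directions. Absorbing it into $-\epsilon\|\zeta\|^2$ would require $\Delta$ to be small compared with $\epsilon=-\lambda_{\max}(\Omega)$. Nothing in $\Omega<0$ gives that: $\Omega$ and $\Delta$ are both linear in $(P,\lambda)$, so rescaling cannot change their ratio, and ``shrinking $\epsilon$'' goes the wrong way. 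For example, take $r=1$, $A_\theta=-\tfrac34$, $\kappa=\tfrac1{10}$, $P=\lambda=1$. Then $\Omega=\left[\begin{smallmatrix}-0.4&-1.75\\-1.75&-20\end{smallmatrix}\right]<0$ with $\epsilon\approx0.245$, while the off-diagonal entry of $\Delta$ is $-3.75$. What is missing is an actual argument that $\Omega<0$ yields a dissipation inequality, e.g.\ feasibility of an $\Omega_{\mathrm{nat}}$-type LMI for some, possibly different, $(P,\lambda)$. Without it there is no valid step from the hypothesis to the decrease of $V$.

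The gain step is also wrong. Under $(P,\lambda)\mapsto s(P,\lambda)$ the ratio $\|P\|_2/\lambda_{\min}(P)$ is invariant. So is the $\|w\|_\infty$-coefficient your Young and comparison steps produce: carried through with $\zeta^T\Omega\zeta\le-\epsilon\|\zeta\|^2$, it is $2\|P\|_2^{3/2}/(\epsilon\,\lambda_{\min}(P)^{1/2})$, which depends on the margin. Normalization therefore cannot turn one into the other, and in your estimate $\sqrt{\|P\|_2/\lambda_{\min}(P)}$ is the overshoot constant, not the gain. In fact your displayed bound with $\gamma=\gamma_{\mathrm{NN}}$ is false. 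In the scalar example above take $\Phi_\theta\equiv0$, which satisfies Assumption~\ref{ass:sector}. Then $\gamma_{\mathrm{NN}}=1$, yet $\dot z=-\tfrac34z+w$ with $w\equiv1$ gives $z(t)\to\tfrac43$.

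In the paper's proof, $\gamma_{\mathrm{NN}}$ appears only as the coefficient in $\dot V\le-\alpha\|e\|^2+\gamma_{\mathrm{NN}}^2\|w\|^2$, and that is where homogeneity can legitimately help. If the $+P$ in the $(1,1)$ block is spent on $2z^TPw\le z^TPz+w^TPw$ and one normalizes $\lambda_{\min}(P)=1$, the coefficient is $\|P\|_2=\gamma_{\mathrm{NN}}^2$. Integrating then gives the energy-to-peak bound $\|z(t)\|\le\gamma_{\mathrm{NN}}(\|z(0)\|+\|w\|_{L_2})$. A $\|w\|_\infty$ bound obtained by comparison instead carries the factor $\gamma_{\mathrm{NN}}^2/\sqrt{\alpha}$.

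Finally, Assumption~\ref{ass:sector} is a non-incremental sector bound on $\Phi_\theta(z,\xi)$ against $z$. Your argument therefore bounds the model's latent state, not the difference between two trajectories. Pulling back through $\Phi^{-1}$ then controls $\hat x$, not $e=x-\hat x$; the paper's proof glosses over the same point.
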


\begin{proof}
Consider the Lyapunov candidate $V(e)=e^TPe$.
Differentiating along the error dynamics and applying
the S-procedure with Assumption~\ref{ass:sector} yield
$\dot{V}\leq\zeta^T\Omega\zeta + 2e^TPw$,
where $\zeta=[e^T,\Phi_\theta(z,\xi)^T]^T$.
Feasibility of \eqref{eq:csode_LMI} gives
$\dot{V}\leq -\alpha\|e\|^2 + \gamma_{\mathrm{NN}}^2\|w\|^2
+ O(\delta_b)$,
from which ISS follows by standard comparison arguments.
The bundle offset $O(\delta_b)$ accounts for the residual
fiber mismatch bounded by $\delta_b$. \hfill$\blacksquare$
\end{proof}

\subsection{ICODE: Extrinsic Input Modeling}

The ICODE architecture \cite{icode} specializes
\eqref{eq:csode} by introducing a bilinear coupling term
between the state and the extrinsic input $\xi$:
\begin{equation}
  \dot{z} = A_0 z + \sum_{j=1}^{n_\xi} \xi_j N_j z
            + B_0 u + g_\theta(z,\xi),
  \label{eq:icode}
\end{equation}
where $N_j\in\mathbb{R}^{r\times r}$ are learnable
interaction matrices and $g_\theta$ is a small residual
network. The bilinear structure allows $\xi$ to modulate
the effective system matrix in a physically interpretable
manner: for a ground vehicle, $\xi_1$ might represent
tire-road friction, shifting the damping matrix continuously
without requiring separate models for each surface type.
Sufficient conditions for the contraction property of
\eqref{eq:icode} to hold uniformly over $\xi\in\mathcal{E}$
are derived in \cite{icode} using a common Lyapunov function
approach; these conditions reduce to a set of coupled LMIs
in $(A_0, \{N_j\}, P)$ that can be solved offline before
deployment.

\subsection{Koopman ISS Verification}
\label{sec:koopman_iss}

An alternative to the Neural ODE parameterization is a
data-driven Koopman representation. Let
$\mathbf{g}:\mathbb{R}^n\to\mathbb{R}^N$ be a dictionary
of observables, with lifted state $\mathbf{z}_k=\mathbf{g}(x_k)$.
EDMD \cite{williams2015edmd}, which originates from the Dynamic
Mode Decomposition of Schmid \cite{schmid2010dmd}, identifies
the regression matrices $(\mathbf{A}_K,\mathbf{B}_K,\mathbf{E}_K)$
such that
\begin{equation}
  \mathbf{z}_{k+1} \approx \mathbf{A}_K\mathbf{z}_k
  + \mathbf{B}_K u_k + \mathbf{E}_K w_k.
  \label{eq:koopman_model}
\end{equation}
The matrices are obtained by least-squares regression over a
dataset of snapshot pairs; the convergence of this approximation
to the true Koopman operator as both the number of snapshots
and the dictionary size increase was analyzed rigorously in
\cite{korda2018convergence}. Dictionary richness directly
governs the quality of the linear representation---a point
made precise in Proposition~\ref{prop:dict} below.
The ISS verification problem for \eqref{eq:koopman_model}
asks whether there exists a Lyapunov matrix
$\mathbf{P}=\mathbf{P}^T>0$ and scalar $\gamma>0$
satisfying the LMI derived in \cite{koopmaniss}:
\begin{equation}
  \Psi_K :=
  \begin{bmatrix}
    \mathbf{A}_K^T\mathbf{P}\mathbf{A}_K - \mathbf{P} + I
    & \mathbf{A}_K^T\mathbf{P}\mathbf{E}_K \\
    \mathbf{E}_K^T\mathbf{P}\mathbf{A}_K
    & \mathbf{E}_K^T\mathbf{P}\mathbf{E}_K - \gamma^2 I
  \end{bmatrix} < 0.
  \label{eq:koop_LMI}
\end{equation}
Feasibility of \eqref{eq:koop_LMI} certifies that the
$L_2$-to-$\ell_2$ gain from $w$ to $\mathbf{z}$ is bounded
by $\gamma$. The optimal gain $\gamma^\star$ is obtained by
minimizing $\gamma^2$ subject to \eqref{eq:koop_LMI} and
$\mathbf{P}>0$, which is a standard generalized eigenvalue
problem solvable in polynomial time \cite{koopmaniss}.

\begin{proposition}[Dictionary Sensitivity]\label{prop:dict}
Let $(\mathbf{A}_K^{(N)},\mathbf{E}_K^{(N)})$ and
$(\mathbf{A}_K^{(M)},\mathbf{E}_K^{(M)})$ be Koopman
matrices identified with dictionaries of size $N$ and
$M>N$, with EDMD residuals $\varepsilon_N\geq\varepsilon_M$.
Then the corresponding optimal ISS gains satisfy
$\gamma^\star_{(N)} \geq \gamma^\star_{(M)}$, and the
difference is bounded above by
$|\gamma^\star_{(N)}-\gamma^\star_{(M)}|\leq
c\,(\varepsilon_N-\varepsilon_M)\cdot\|\mathbf{P}^{-1}\|_2$,
for a constant $c$ depending only on the spectral radius
of $\mathbf{A}_K$.
\end{proposition}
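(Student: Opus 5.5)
The plan is a nesting (embedding) argument for the monotonicity claim, followed by a first-order perturbation argument on the LMI \eqref{eq:koop_LMI} for the quantitative bound. Throughout we interpret the two models as acting on a common lifted space. We assume the size-$N$ dictionary is a sub-dictionary of the size-$M$ one, and pad $(\mathbf{A}_K^{(N)},\mathbf{E}_K^{(N)})$ with zero rows and columns to dimension $M$. We also read the residual $\varepsilon$ as the operator-norm discrepancy between the identified pair and the Galerkin projection of the true Koopman generator. A literal reading of the statement does not hold for arbitrary, unrelated dictionaries, so we adopt this reading explicitly.

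\textbf{Step 1 (monotonicity).} We show that any feasible pair $(\mathbf{P}_M,\gamma)$ for the larger model induces a feasible pair for the smaller one at a gain no smaller than $\gamma^\star_{(M)}$. The most robust route avoids transferring $\mathbf{P}$ directly. Instead, we use that $\gamma^\star$ is the $H_\infty$ norm of the map $w\mapsto\mathbf{z}$. By the bounded real lemma this norm equals the infimum in \eqref{eq:koop_LMI}, given the identity output weighting. The richer dictionary captures strictly more of the dynamics. In the padded coordinates, the $N$-model is the $M$-model with an additional unmodelled perturbation of size $\varepsilon_N-\varepsilon_M$ acting on the state. That unresolved perturbation can only enlarge the worst-case gain, so $\gamma^\star_{(N)}\ge\gamma^\star_{(M)}$. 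Formally, we take the Schur complement of \eqref{eq:koop_LMI} in the $(2,2)$ block. This rewrites feasibility as a Riccati inequality, and its minimal solution is monotone in the disturbance channel.

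\textbf{Step 2 (Lipschitz bound).} Let $\Delta A=\mathbf{A}_K^{(N)}-\mathbf{A}_K^{(M)}$ and $\Delta E=\mathbf{E}_K^{(N)}-\mathbf{E}_K^{(M)}$, both bounded in norm by $\varepsilon_N-\varepsilon_M$ under our interpretation of the residuals. We take the optimizer $(\mathbf{P},\gamma^\star_{(M)})$ for the larger model and substitute the perturbed matrices into $\Psi_K$. Expanding gives $\Psi_K^{(N)}=\Psi_K^{(M)}+\Delta$. The first-order part of $\Delta$ is $\mathrm{He}$ of terms such as $\mathbf{A}^T\mathbf{P}\Delta A$ and $\mathbf{A}^T\mathbf{P}\Delta E$, plus second-order terms. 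Its norm is at most $c_0\|\mathbf{P}\|_2\,\rho(\mathbf{A}_K)(\varepsilon_N-\varepsilon_M)$, up to higher order.

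To restore negativity we increase $\gamma^2$ and rescale. This is done most cleanly after a congruence transformation with $\mathrm{diag}(\mathbf{P}^{-1/2},I)$: the state block is then normalized, and the required gain increase scales with $\|\mathbf{P}^{-1}\|_2$ rather than $\|\mathbf{P}\|_2$. The outcome is $(\gamma^\star_{(N)})^2-(\gamma^\star_{(M)})^2\le c_1(\varepsilon_N-\varepsilon_M)\|\mathbf{P}^{-1}\|_2$. Dividing by $\gamma^\star_{(N)}+\gamma^\star_{(M)}$, which is bounded below by a positive constant, gives the stated bound. The constant $c$ absorbs that lower bound together with a factor $1+\rho(\mathbf{A}_K)^2$.

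\textbf{Main obstacle.} The difficulty is the margin bookkeeping in Step 2. The perturbation must be dominated by the slack of the strict LMI, but at the optimum that slack vanishes. We would therefore work with an $\eta$-suboptimal strictly feasible $\mathbf{P}$ and let $\eta\to0$. In addition, the perturbation must be small enough that a Neumann-series argument, which requires $\rho(\mathbf{A}_K)<1$ from Schur stability implied by feasibility, keeps the perturbed $\mathbf{A}_K$ stable. We expect the cleanest path to a constant of the form $c(\rho(\mathbf{A}_K))$ to be an explicit Schur-complement inequality with $\|(I-\mathbf{A}_K)^{-1}\|$-type factors, with these bounded using $\rho(\mathbf{A}_K)$ and the conditioning of $\mathbf{P}$.
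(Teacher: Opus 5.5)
The paper states Proposition~\ref{prop:dict} without any proof, so there is no argument to compare yours with. Judged on its own, your proposal has a genuine gap, and your nested-dictionary, Galerkin-residual reading does not repair it, because the statement is false even in that setting. Take $x_{k+1}=a\,x_k+[1\;\,1]^T w_k$ on $\mathbb{R}^2$ with $0<a<1$, and the nested dictionaries $\{x_1\}$ ($N=1$) and $\{x_1,x_2\}$ ($M=2$). EDMD is exact for both, so $\varepsilon_N=\varepsilon_M=0$ under any standard reading of the residual. Yet by the bounded real lemma you invoke (output weighting $I$ in \eqref{eq:koop_LMI}, so $\gamma^\star=\|(zI-\mathbf{A}_K)^{-1}\mathbf{E}_K\|_\infty$), $\gamma^\star_{(N)}=1/(1-a)<\sqrt{2}/(1-a)=\gamma^\star_{(M)}$. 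Both the monotonicity and the bound (whose right-hand side is $0$) fail.

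This is exactly where Step~1 breaks. Every lifted coordinate is an output, so when the $M$-model contains the $N$-model as a subsystem, enlarging the dictionary only adds output channels and $\gamma^\star$ cannot decrease. Your claim that the unresolved perturbation ``can only enlarge the worst-case gain'' conflates the nominal $H_\infty$ norm of one fixed model with a robust worst case over an uncertainty set; the $H_\infty$ norm has no sign-definite response to matrix perturbations. The Riccati comparison you cite needs a common $\mathbf{A}_K$ and ordered $\mathbf{E}_K\mathbf{E}_K^T$, whereas here $\mathbf{A}_K$, $\mathbf{E}_K$ and the dimension all change. Note also that the transfer you announce at the start of Step~1 (feasible for the larger model $\Rightarrow$ feasible for the smaller) can only produce upper bounds on $\gamma^\star_{(N)}$. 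It points toward $\gamma^\star_{(N)}\le\gamma^\star_{(M)}$, the reverse of what you want.

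Step~2 fails at its premise $\|\Delta A\|_2,\|\Delta E\|_2\le\varepsilon_N-\varepsilon_M$. Under your interpretation the two residuals are distances to \emph{different} Galerkin matrices $K_N$, $K_M$. The triangle inequality then gives only $\|\Delta A\|_2\le\varepsilon_N+\varepsilon_M+\|K_N-K_M\|_2$ (after padding). Even against a common reference, the reverse triangle inequality would make $\varepsilon_N-\varepsilon_M$ a \emph{lower} bound on $\|\Delta A\|_2$, and with $\varepsilon_N=\varepsilon_M$ your premise would force the two models to coincide. Zero-padding also places the entire block of $\mathbf{A}_K^{(M)}$ governing the new observables into $\Delta A$: in the example, $\Delta A=\mathrm{diag}(0,-a)$ and $\Delta E=[0\;\,{-1}]^T$ while $\varepsilon_N-\varepsilon_M=0$.

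Finally, the claimed constant is unattainable even for the same-dimensional Lipschitz question your perturbation argument actually addresses. The $(1,1)$ block of \eqref{eq:koop_LMI} forces $\mathbf{P}>I$, so $\|\mathbf{P}^{-1}\|_2<1$ and $c(\rho(\mathbf{A}_K))\,\|\mathbf{P}^{-1}\|_2$ stays bounded at fixed $\rho$. But take $\mathbf{A}_K=\left[\begin{smallmatrix}0&t\\0&0\end{smallmatrix}\right]$ and $\mathbf{E}_K=[0\;\,1]^T$, so $\rho=0$. Perturbing the $(2,1)$ entry by $\delta$ gives $\gamma^\star=\sqrt{1+t^2}/(1-|t\delta|)$, a sensitivity of order $t^2$. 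What your Step~2 can honestly deliver is $|\gamma^\star(\mathbf{A}+\Delta A,\mathbf{E}+\Delta E)-\gamma^\star(\mathbf{A},\mathbf{E})|\le L(\|\Delta A\|_2+\|\Delta E\|_2)$ for models of equal dimension. Here $L$ depends on $\|\mathbf{P}\|_2$, the conditioning of $\mathbf{P}$, $\gamma^\star$ and the LMI slack. That yields neither monotonicity in dictionary size nor a bound in $\varepsilon_N-\varepsilon_M$.
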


Proposition~\ref{prop:dict} formalizes the intuition that
enlarging the dictionary monotonically improves
(or at worst preserves) the ISS certificate, providing
a practical guide for dictionary design. For a comprehensive
treatment of dictionary construction strategies and their
theoretical underpinnings in the Koopman framework, see
\cite{mauroy2020koopman}. In the experiments below, the
dictionary is initialized from a sparse set of monomials
identified by a SINDy-style procedure \cite{brunton2016sindy}
and then augmented with radial basis functions.

\section{ICODE-MPPI: Certified Predictive Control}

\subsection{Sampling-Based Control with Residual Dynamics}

Given a certified forward model---either the CSODE
\eqref{eq:csode} or the Koopman predictor
\eqref{eq:koopman_model}---we embed it in the ICODE-MPPI
framework \cite{icodemppi}. The choice of certified predictor
as the rollout engine is motivated by the observation that
uncertified rollouts can incur unbounded cost variance,
leading to numerically degenerate MPPI updates; this is
analogous to the variance explosion that afflicts standard
recurrent networks when their spectral norm is not controlled
\cite{goodfellow2016}. At each control step $t$,
$M$ control perturbations $\{\delta U^{(m)}\}_{m=1}^M$
are sampled from $\mathcal{N}(0,\Sigma_u)$ and
propagated through the certified model over horizon $T$:
\begin{equation}
  J^{(m)} = \int_{t}^{t+T}\!
  \Bigl(\|x^{(m)}(s)-x_{\mathrm{ref}}(s)\|_Q^2
  + \|u^{(m)}(s)\|_R^2\Bigr)\,ds
  + \phi(x^{(m)}(t+T)),
  \label{eq:cost}
\end{equation}
where $\phi$ is a terminal cost penalizing distance from
a target manifold. The control update follows the
information-theoretic MPPI weighting \cite{williams2017information}:
\begin{equation}
  u^\star(t)
  = u_{\mathrm{nom}}(t) +
  \frac{\sum_{m=1}^M \exp(-\lambda_T^{-1}J^{(m)})\,\delta u^{(m)}}
       {\sum_{m=1}^M \exp(-\lambda_T^{-1}J^{(m)})},
  \label{eq:mppi}
\end{equation}
with temperature $\lambda_T>0$. A continuous-time residual
ICODE network is appended to the nominal predictor to
compensate for unmodeled drift at each rollout step
\cite{icodemppi}; unlike discrete-time correction networks,
this preserves the temporal continuity required for
accurate numerical integration of \eqref{eq:cost}.

\subsection{Variance Reduction via Certified Predictors}

A key advantage of using certified models in the rollout
is bounded rollout variance. Let $\sigma_J^2(M,T)$ denote
the variance of the cost samples $\{J^{(m)}\}$ for a given
horizon $T$ and rollout count $M$.

\begin{proposition}[Rollout Variance Bound]\label{prop:var}
Suppose the certified CSODE satisfies
$\|e(t)\|\leq c_1 e^{-\alpha t}\|e(0)\|+\gamma_{\mathrm{NN}}\|w\|_{L_2}$
with constants $c_1,\alpha,\gamma_{\mathrm{NN}}>0$.
Then
\begin{equation}
  \sigma_J^2(M,T) \leq
  \frac{T^2\|Q\|^2}{M}\!\left(
  c_1^2\|e(0)\|^2 + \gamma_{\mathrm{NN}}^2 \bar{w}^2
  \right)^{\!2},
  \label{eq:var_bound}
\end{equation}
where $\bar{w}=\|w\|_{L_\infty}$ is the disturbance bound.
\end{proposition}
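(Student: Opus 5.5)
The plan is to treat the $M$ cost samples $J^{(1)},\dots,J^{(M)}$ as i.i.d.\ draws, as induced by the i.i.d.\ perturbations $\delta U^{(m)}\sim\mathcal{N}(0,\Sigma_u)$. I then read $\sigma_J^2(M,T)$ as the variance of the Monte Carlo cost estimate. That variance equals the single-sample variance divided by $M$, which accounts for the $1/M$ factor. What remains is to bound the single-sample variance $\mathrm{Var}(J^{(1)})$ by $T^2\|Q\|^2(c_1^2\|e(0)\|^2+\gamma_{\mathrm{NN}}^2\bar w^2)^2$. For this I use $\mathrm{Var}(J)\le \mathbb{E}[(J-J_0)^2]\le \sup (J-J_0)^2$ with a convenient deterministic centring $J_0$.

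First I fix the stage cost. I interpret $x^{(m)}(s)-x_{\mathrm{ref}}(s)$ as the error signal $e(s)$ of the certified CSODE, so the tracking part of \eqref{eq:cost} is $\int_t^{t+T}\|e(s)\|_Q^2\,ds\le \|Q\|\int_t^{t+T}\|e(s)\|^2ds$. Next I square the hypothesised ISS estimate. Using $(a+b)^2\le 2a^2+2b^2$ and $e^{-2\alpha s}\le 1$, I get $\|e(s)\|^2\lesssim c_1^2\|e(0)\|^2+\gamma_{\mathrm{NN}}^2\|w\|^2$, uniformly in $s$ and in the sample index $m$. I then replace $\|w\|_{L_2}$ by the pointwise bound $\bar w$. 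The cleanest way is to read the ISS gain in its $L_\infty$ form, where the disturbance enters through $\sup_s\|w(s)\|=\bar w$. Alternatively, if the horizon is normalised, one can invoke $\|w\|_{L_2[t,t+T]}\le \sqrt{T}\,\bar w$ and absorb the constant. Integrating over the horizon then gives $0\le \int\|e\|_Q^2\le T\|Q\|\,(c_1^2\|e(0)\|^2+\gamma_{\mathrm{NN}}^2\bar w^2)$ for every rollout.

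For the control and terminal terms I take the centring $J_0$ to absorb the sample-independent parts. I would argue that $u_{\mathrm{nom}}$ is common to all rollouts. I would also use that the terminal cost $\phi$ is evaluated on trajectories whose spread is again controlled by the same ISS bound. So the fluctuation of $J^{(m)}$ is dominated by the tracking term, i.e.\ $|J^{(m)}-J_0|\le T\|Q\|(c_1^2\|e(0)\|^2+\gamma_{\mathrm{NN}}^2\bar w^2)$. Squaring gives the per-sample bound, and dividing by $M$ yields \eqref{eq:var_bound}. Note that the exponent $2$ on the bracket arises naturally from squaring a cost that is itself quadratic in $\|e\|$.

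The main obstacle is the handling of the non-tracking terms $\|u^{(m)}\|_R^2$ and $\phi$. These depend on the Gaussian perturbations and are not bounded in the stated form. The stated bound contains no $R$, $\Sigma_u$ or Lipschitz constant of $\phi$, so I would need either an explicit standing assumption or a reinterpretation. The assumption would be that $\sigma_J^2$ measures the variance of the state-tracking component only, with the input penalty treated as a deterministic regulariser conditioned on $\delta U$. The reinterpretation would absorb those terms into $e(0)$ by choosing the worst-case initial mismatch. The first thing I would try is to state the proposition conditionally on the perturbation sequence and bound only the model-induced variability. That is precisely what certification controls, and it makes the constants match. The secondary issue is the $L_2$ versus $L_\infty$ mismatch in the disturbance term, which I would resolve by taking the ISS estimate in sup-norm form.
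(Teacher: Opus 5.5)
The paper states Proposition~\ref{prop:var} without any proof, so there is no argument to compare yours against. Judged on its own, your proposal does not close, and the obstacle you flag at the end is fatal rather than technical. As written, the statement is false. Take $Q=0$, $\phi\equiv 0$, $R>0$, $\Sigma_u>0$. The right-hand side of \eqref{eq:var_bound} vanishes. Yet $J^{(m)}=\int_t^{t+T}\|u_{\mathrm{nom}}(s)+\delta u^{(m)}(s)\|_R^2\,ds$ is a nonconstant function of a nondegenerate Gaussian, so the costs have strictly positive variance under either reading of $\sigma_J^2$ (sample variance, or variance of the Monte Carlo mean).

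So no argument can succeed without altering the hypotheses, and your plan is in effect a list of such alterations. First, you replace the ISS estimate by a sup-norm version. The given estimate involves $\|w\|_{L_2}$, which is not controlled by $\bar w$ on $[0,\infty)$. Your $\sqrt{T}$ alternative misreads the interval over which that norm is taken, and it also puts an extra factor $T$ on $\gamma_{\mathrm{NN}}^2\bar w^2$. Second, you identify the tracking error $x^{(m)}-x_{\mathrm{ref}}$ with the model error $e=x-\hat{x}$, which is what the certificate actually bounds. Third, you claim the estimate holds uniformly in $m$. But $\delta U^{(m)}$ is unbounded and enters through $B_\theta u$, and the hypothesis says nothing about sensitivity to $u$.

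Your proposed repair is also inconsistent with the rest of your argument. You obtain the factor $1/M$ by treating the $J^{(m)}$ as i.i.d.\ because the $\delta U^{(m)}$ are. You then condition on the perturbation sequence to freeze the $R$-term. Given the $\delta U^{(m)}$, rollouts through a deterministic certified model yield deterministic costs. The conditional variance is then zero, which makes the bound vacuous and the $1/M$ meaningless. If instead you mean the empirical spread of the $M$ fixed values, it still contains the differences of $\|u^{(m)}\|_R^2$ across $m$ and does not scale like $1/M$.

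A coherent version would need three changes, and it would be a different proposition:
\begin{itemize}
\item another i.i.d.\ source of randomness, e.g.\ a common control sequence with independent disturbance realizations $w^{(m)}$, $\|w^{(m)}\|_{L_\infty}\le\bar w$;
\item a sup-norm ISS bound on the tracking error itself;
\item a constant $\phi$.
\end{itemize}
Even in that setting, your constants need fixing. Squaring the estimate gives $\|e(s)\|^2\le 2(c_1^2\|e(0)\|^2+\gamma_{\mathrm{NN}}^2\bar w^2)$, which your $\lesssim$ step hides. Centring $J_0$ at the sample-independent part then costs a factor $4$. You recover exactly \eqref{eq:var_bound} by centring at the midpoint of the tracking term's range $[0,\,2T\|Q\|(c_1^2\|e(0)\|^2+\gamma_{\mathrm{NN}}^2\bar w^2)]$, i.e.\ by Popoviciu's inequality $\mathrm{Var}(X)\le (b-a)^2/4$ for $X\in[a,b]$.
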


Proposition~\ref{prop:var} shows that the per-rollout
cost variance is bounded by the square of the ISS gain,
confirming that tighter stability certificates translate
directly into more reliable MPPI weighting and therefore
smoother control outputs. This result connects to classical
ultimate boundedness analysis \cite{khalil2002,Slotine1991}:
the ISS gain $\gamma_{\mathrm{NN}}$ plays the role of the
$\mathcal{K}$-class function bounding the steady-state
error under persistent disturbances. This theoretical
backing complements the empirical smoothness improvements
reported in \cite{icodemppi}.

\section{Numerical Experiments}

\subsection{Experimental Setup}

Two benchmarks are considered. \textbf{Benchmark 1} is a
kinematic bicycle model of a ground vehicle with
state $[p_x,p_y,\psi,v]^T$ and control $[\delta,a]^T$,
subject to time-varying friction ($\mu\in[0.3,0.9]$) and
lateral wind disturbances (up to 6\,m/s). \textbf{Benchmark 2}
is a Duffing oscillator
$\ddot{q}+0.3\dot{q}+q^3=u+w\cos(1.2t)$,
a canonical nonlinear system whose response is sensitive
to the initial condition and forcing amplitude. Both systems
are simulated at 50\,Hz with 8000 training trajectories
of length 30\,s and held-out test sets of 500 trajectories.

Four methods are compared:
\begin{enumerate}
  \item \textbf{Vanilla NODE}: a standard Neural ODE with
        no structural constraints.
  \item \textbf{Koopman (uncertified)}: EDMD with a
        degree-3 polynomial dictionary, no ISS check.
  \item \textbf{Koopman + ISS}: same dictionary, with the
        post-hoc ISS certificate from \cite{koopmaniss};
        models failing the LMI are re-identified with
        increased regularization until feasibility is achieved.
  \item \textbf{CSODE-ICODE-MPPI}: the full pipeline—CSODE
        \cite{controlsynth} with ICODE extrinsic coupling
        \cite{icode}, bundle confinement \cite{bundle},
        and ICODE-MPPI \cite{icodemppi} for control.
\end{enumerate}

\subsection{Path-Tracking Performance}

Fig.~\ref{fig:tracking} shows the lateral deviation for
the vehicle benchmark under a double lane-change with a
step friction drop at $t=5$\,s. All methods use the same
MPPI rollout budget ($M=1500$, $T=2$\,s). The uncertified
Koopman predictor diverges intermittently after the
friction switch; the vanilla NODE accumulates a persistent
lateral bias during high-curvature sections. The certified
Koopman predictor recovers within roughly 0.8\,s but
exhibits residual oscillation due to the approximation
gap between the polynomial dictionary and the true dynamics.
The full CSODE-ICODE-MPPI pipeline tracks the reference
with the smallest transient and near-zero steady-state offset.

\begin{figure}[t]
\centering
\begin{tikzpicture}
\begin{axis}[
  width=0.95\columnwidth, height=5.2cm,
  xlabel={Time (s)},
  ylabel={Lateral deviation (m)},
  xmin=0, xmax=12,
  ymin=-0.08, ymax=0.90,
  legend pos=north east,
  legend style={font=\footnotesize, fill=white, fill opacity=0.88,
    draw=gray!40, inner sep=2pt},
  grid=major, grid style={line width=0.3pt, draw=gray!22},
  tick label style={font=\footnotesize},
  label style={font=\footnotesize},
  every axis plot/.append style={line width=0.9pt}
]
\draw[dashed, gray!55, line width=0.7pt]
  (axis cs:5,-0.08) -- (axis cs:5,0.90);
\node[font=\tiny, gray, anchor=south west] at (axis cs:5.1,0.80)
  {$\mu$ drop};
\addplot[blue!65, dashed] coordinates {
  (0,0.04)(1,0.06)(2,0.05)(3,0.07)(4,0.08)(5,0.08)
  (5.2,0.55)(5.6,0.42)(6,0.30)(7,0.22)(8,0.20)(9,0.19)
  (10,0.19)(11,0.18)(12,0.18)};
\addlegendentry{Vanilla NODE};
\addplot[green!55!black, dotted] coordinates {
  (0,0.03)(1,0.05)(2,0.04)(3,0.06)(4,0.07)(5,0.07)
  (5.2,0.72)(5.5,0.60)(5.8,0.50)(6,0.45)(6.5,0.38)(7,0.35)
  (8,0.33)(9,0.33)(10,0.85)(10.5,0.78)(11,0.60)(12,0.45)};
\addlegendentry{Koopman (uncert.)};
\addplot[orange!80, dashdotted] coordinates {
  (0,0.03)(1,0.04)(2,0.04)(3,0.05)(4,0.06)(5,0.06)
  (5.2,0.48)(5.5,0.35)(6,0.22)(6.5,0.16)(7,0.13)(8,0.12)
  (9,0.11)(10,0.12)(11,0.11)(12,0.10)};
\addlegendentry{Koopman+ISS \cite{koopmaniss}};
\addplot[red!78, solid, line width=1.1pt] coordinates {
  (0,0.02)(1,0.02)(2,0.02)(3,0.03)(4,0.03)(5,0.03)
  (5.1,0.15)(5.3,0.09)(5.6,0.05)(6,0.03)(7,0.03)(8,0.03)
  (9,0.03)(10,0.03)(11,0.03)(12,0.03)};
\addlegendentry{CSODE-ICODE-MPPI};
\end{axis}
\end{tikzpicture}
\caption{Lateral deviation under double lane-change with step
friction drop at $t=5$\,s. The vertical dashed line marks
the disturbance onset. The full pipeline recovers within
0.3\,s, while uncertified methods exhibit persistent
or intermittent deviations.}
\label{fig:tracking}
\end{figure}

\subsection{State Estimation on the Duffing Oscillator}

Fig.~\ref{fig:duffing} reports the RMSE of $q$ estimation
as a function of the disturbance amplitude $a_w$
(where $w=a_w\cos(1.2t)$), averaged over 500 test
trajectories of 15\,s. The ISS-certified Koopman model
degrades gracefully with $a_w$, consistent with the
linear scaling $\gamma^\star a_w$ predicted by
\eqref{eq:koop_LMI}. The vanilla NODE exhibits
super-linear degradation beyond $a_w\approx0.4$,
suggesting that its implicit stability margin is
exhausted. The full CSODE-ICODE-MPPI pipeline maintains
the lowest RMSE throughout by combining the certified
predictor with the residual ICODE compensation.

\begin{figure}[t]
\centering
\begin{tikzpicture}
\begin{axis}[
  width=0.95\columnwidth, height=5.0cm,
  xlabel={Disturbance amplitude $a_w$},
  ylabel={Position RMSE (m)},
  xmin=0, xmax=1.0,
  ymin=0, ymax=0.62,
  legend pos=north west,
  legend style={font=\footnotesize, fill=white, draw=gray!40},
  grid=major, grid style={line width=0.3pt, draw=gray!22},
  tick label style={font=\footnotesize},
  label style={font=\footnotesize},
  every axis plot/.append style={line width=0.95pt}
]
\addplot[blue!65, dashed, mark=square*, mark size=2pt] coordinates {
  (0,0.04)(0.1,0.07)(0.2,0.11)(0.3,0.17)(0.4,0.24)(0.5,0.35)
  (0.6,0.47)(0.7,0.56)(0.8,0.60)(0.9,0.61)(1.0,0.61)};
\addlegendentry{Vanilla NODE};
\addplot[green!55!black, dotted, mark=triangle*, mark size=2pt] coordinates {
  (0,0.05)(0.1,0.09)(0.2,0.14)(0.3,0.20)(0.4,0.27)(0.5,0.34)
  (0.6,0.40)(0.7,0.46)(0.8,0.51)(0.9,0.55)(1.0,0.58)};
\addlegendentry{Koopman (uncert.)};
\addplot[orange!80, dashdotted, mark=diamond*, mark size=2pt] coordinates {
  (0,0.03)(0.1,0.07)(0.2,0.11)(0.3,0.15)(0.4,0.20)(0.5,0.24)
  (0.6,0.29)(0.7,0.33)(0.8,0.37)(0.9,0.41)(1.0,0.44)};
\addlegendentry{Koopman+ISS \cite{koopmaniss}};
\addplot[red!78, solid, mark=*, mark size=2pt] coordinates {
  (0,0.02)(0.1,0.04)(0.2,0.07)(0.3,0.10)(0.4,0.14)(0.5,0.18)
  (0.6,0.22)(0.7,0.26)(0.8,0.29)(0.9,0.32)(1.0,0.35)};
\addlegendentry{CSODE-ICODE-MPPI};
\addplot[black, very thin, dashed] coordinates {(0,0.03)(1.0,0.44)};
\node[font=\tiny, rotate=23, anchor=west] at (axis cs:0.35,0.17)
  {linear $\gamma^\star a_w$};
\end{axis}
\end{tikzpicture}
\caption{Duffing oscillator position RMSE vs.\ disturbance
amplitude (500 test trajectories). The certified Koopman curve
follows the linear prediction $\gamma^\star a_w$, confirming
the ISS bound of \cite{koopmaniss}. The full pipeline achieves
the lowest absolute error at all disturbance levels.}
\label{fig:duffing}
\end{figure}

\subsection{Quantitative Summary and Ablation}

Table~\ref{tab:results} reports tracking RMSE, control
smoothness (mean absolute steering rate for B1; control
power for B2), LMI feasibility rate, and computation
time per control step for all four methods. All metrics
are averaged over 500 test episodes.

\begin{table}[t]
\caption{Quantitative Performance Comparison (500 Test Episodes)}
\label{tab:results}
\centering
\setlength{\tabcolsep}{3.2pt}
\renewcommand{\arraystretch}{1.06}
\begin{tabular}{lcccc}
\toprule
\multirow{2}{*}{\textbf{Method}}
  & \multicolumn{2}{c}{\textbf{Tracking RMSE}}
  & \textbf{LMI}
  & \textbf{Step} \\
\cmidrule(lr){2-3}
  & B1 (m) & B2 (m) & \textbf{Feas.}
  & \textbf{time (ms)} \\
\midrule
Vanilla NODE
  & $0.198\pm0.031$ & $0.214\pm0.038$ & ---  & 2.1 \\
Koopman (uncert.)
  & $0.185\pm0.028$ & $0.241\pm0.040$ & N/A  & 1.4 \\
Koopman+ISS \cite{koopmaniss}
  & $0.114\pm0.019$ & $0.148\pm0.026$ & 100\% & 1.6 \\
\textbf{CSODE-ICODE-MPPI}
  & $\mathbf{0.077\pm0.013}$ & $\mathbf{0.098\pm0.017}$
  & \textbf{100\%} & \textbf{8.3} \\
\bottomrule
\end{tabular}
\end{table}

The CSODE-ICODE-MPPI pipeline achieves the best tracking
accuracy on both benchmarks: a 61\% RMSE reduction over
the vanilla NODE on B1 and a 54\% reduction on B2.
The Koopman+ISS method provides substantial improvement
over uncertified alternatives at the cost of a small
per-step overhead (0.2\,ms) for the LMI check. The
LMI feasibility rate of 100\% across all test episodes
confirms that the certified models do not lose their
stability guarantee under distribution shift, consistent
with the theoretical prediction of \cite{koopmaniss}.

The 8.3\,ms per-step time of the full pipeline is
dominated by the CSODE integration (6.1\,ms) and fits
comfortably within a 20\,Hz real-time budget. This
validates the practical deployability of the approach,
consistent with the real-time results reported for
ICODE-MPPI in \cite{icodemppi}.

\subsection{Ablation Study}

Fig.~\ref{fig:ablation} shows the effect of removing
individual structural components from the full pipeline.
Removing the bundle constraint (w/o Bundle \cite{bundle})
increases B1 RMSE by 18\% and introduces occasional
infeasibility in the MPPI cost due to trajectories
drifting off the admissible manifold. Removing the ICODE
extrinsic coupling (w/o ICODE \cite{icode}) degrades
performance by 22\% in the friction-switching scenario,
since the nominal CSODE cannot adapt its effective damping
coefficient in real time. Removing the ISS-certification
step from the Koopman component (w/o ISS \cite{koopmaniss})
raises the variance of the cost samples by a factor of 3.1,
consistent with Proposition~\ref{prop:var}.

\begin{figure}[t]
\centering
\begin{tikzpicture}
\begin{axis}[
  xbar,
  width=0.93\columnwidth, height=4.6cm,
  xlabel={Tracking RMSE, Benchmark~1 (m)},
  symbolic y coords={
    {w/o ISS},
    {w/o ICODE},
    {w/o Bundle},
    {Full pipeline}},
  ytick=data,
  yticklabel style={font=\footnotesize},
  xticklabel style={font=\footnotesize},
  label style={font=\footnotesize},
  xmin=0, xmax=0.16,
  nodes near coords,
  nodes near coords style={font=\tiny, anchor=west},
  bar width=9pt,
  enlarge y limits=0.22,
  grid=major, grid style={line width=0.3pt, draw=gray!22},
]
\addplot[fill=red!75, draw=red!80] coordinates {
  (0.077,{Full pipeline})
  (0.091,{w/o Bundle})
  (0.100,{w/o ICODE})
  (0.131,{w/o ISS})};
\end{axis}
\end{tikzpicture}
\caption{Ablation study: tracking RMSE on Benchmark~1 when
each structural component is removed in isolation.
Each bar label shows the mean RMSE; error bars are omitted
for clarity (standard deviations are below 15\% of the mean).}
\label{fig:ablation}
\end{figure}

The ablation confirms that all three structural components—bundle
confinement \cite{bundle}, ICODE extrinsic coupling \cite{icode},
and Koopman ISS certification \cite{koopmaniss}—contribute
independently to the overall performance, with the ISS
certificate providing the single largest gain in terms of
reducing cost variance and improving recovery speed after
disturbances.

\section{Conclusion}

This paper assembled four recently developed tools—ControlSynth
Neural ODEs \cite{controlsynth}, the ICODE extrinsic-input
framework \cite{icode}, the measurement-induced bundle
structure \cite{bundle}, and the Koopman ISS verification
pipeline \cite{koopmaniss}—into a unified, end-to-end certified
data-driven control architecture deployed via the ICODE-MPPI
controller \cite{icodemppi}. The central theoretical
contribution is Theorem~\ref{thm:csode_ISS}, which establishes
that the combination of ControlSynth's sector constraint and
the bundle confinement loss jointly imply ISS of the latent
dynamics with an explicit gain formula. Proposition~\ref{prop:dict}
provides a monotonicity result for Koopman ISS gains with
respect to dictionary size, and Proposition~\ref{prop:var}
links tighter ISS certificates to reduced MPPI rollout
variance.

Experimentally, the full pipeline reduces tracking RMSE by
61\% and 54\% on the vehicle and Duffing benchmarks,
respectively, while maintaining 100\% LMI feasibility under
all tested disturbance levels. An ablation study confirms
that each structural component makes a separable and
measurable contribution to the overall performance.

Several open questions remain. Extending the verification
pipeline to handle stochastic disturbances---rather than
bounded deterministic signals---would broaden the
applicability of \cite{koopmaniss} to noise-driven systems.
Structured architectures that enforce the Lyapunov condition
by construction, such as input-convex neural networks
\cite{amos2017icnn}, offer a complementary direction: combining
such convex structures with the Koopman lifting of
\cite{lusch2018deep} could yield models whose stability
certificates are automatically satisfied, removing the
need for post-hoc verification. Distributed implementations
of the bundle structure \cite{bundle} for multi-agent
scenarios are another natural direction, as is the
integration of the ICODE contraction conditions \cite{icode}
with event-triggered control to reduce actuation bandwidth.


\end{document}